\newcommand{\adv}[1]{\mathrm{Adv}\left(#1 \right)}
\newcommand{\herm}[1]{\mathbb{H}^{#1}}
\newcommand{\norm}[1]{\left\Vert #1 \right\Vert}
\newcommand{\psd}[1]{\mathbb{S}^{#1}}
\newcommand{\cvec}[1]{\mathbb{C}^{#1}}
\newcommand{\decomp}[2][{\alpha[d]}]{\sum_{d\in D} #1 \ket{d}\otimes\ket{#2_d}}
\newcommand{\did}{x\in D}
\newcommand{\seqd}[1]{\left( #1 \right)_{\did}}
\NewDocumentCommand{\coll}{ O{0} O{d} O{\ket}}{#3{d}\otimes#3{\psi^{#1}_{#2}}}
\newcommand{\ps}{\sqrt{p_{\mathrm{true}}}}
\newcommand{\kbp}[1][\ps]{\ket{#1}\bra{#1}}
\newcommand{\bigT}[1]{\theta\left(#1\right)}
\newcommand{\bool}{\mathbb{Z}_2}
\newtheorem{thm}{Theorem}
\newtheorem{lemma}{Lemma}
\title{One Weird Trick Tightens the Quantum Adversary Bound, Especially for Success Probability Close to $1/2$}
\author{Duyal Yolcu\thanks{\url{https://github.com/qudent}}}
\begin{document}
\maketitle
\begin{abstract}
The textbook adversary bound for function evaluation \cite{ambainis2000quantum, childs2017lecture, negative_weights} states that to evaluate a function $f\colon D\to C$ with success probability $\frac{1}{2}+\delta$ in the quantum query model, one needs at least $\left( 2\delta -\sqrt{1-4\delta^2} \right) \adv{f}$ queries, where $\adv{f}$ is the optimal value of a certain optimization problem. For $\delta \ll 1$, this only allows for a bound of $\bigT{\delta^2 \adv{f}}$ even after a repetition-and-majority-voting argument. In contrast, the polynomial method can sometimes prove a bound that doesn't converge to $0$ as $\delta \to 0$. We improve the $\delta$-dependent prefactor and achieve a bound of $2\delta \adv{f}$. The proof idea is to "turn the output condition into an input condition": From an algorithm that transforms perfectly input-independent initial to imperfectly distinguishable final states, we construct one that transforms imperfectly input-independent initial to perfectly distinguishable final states in the same number of queries by projecting onto the "correct" final subspaces and uncomputing. The resulting $\delta$-dependent condition on initial Gram matrices, compared to the original algorithm's condition on final Gram matrices, allows deriving the tightened prefactor.
\end{abstract}
\section{Outline}
\begin{itemize}
	\item Section \ref{sec:the-problem-function-evaluation} describes the problem of function evaluation in the quantum query model; Table \ref{table:notation_funceval} contains a reference of the symbols defined therein.
	\item Section \ref{sec:the-new-lemma-from-perfect-initial-to-perfect-final-states} proves Lemma \ref{lem:initial_to_final}, capturing the central new idea of this article: Assuming that an algorithm solving such a problem exists, we may modify the initial state it is applied to so that the resulting modified final states allow solving the problem with error probability $0$.
	\item Table \ref{table:notation_linalg} introduces more generic linear algebra notation that is relevant in subsequent sections. Section \ref{sec:a-matrix-norm-inequality} proves Lemma \ref{lem:matrix_norm_inequality}, a technical lemma regarding matrix norms that we'll be able to plug into the adversary bound. The proof uses a result from \cite{bhatia2009positive}.
	\item Section \ref{sec:the-adversary-bound-for-state-conversion} introduces the adversary bound for state conversion.
	\item Section \ref{sec:putting-it-together} puts together the previous results to obtain Theorem \ref{thm:adversary_function}, the central result of this article.
	\item Section \ref{sec:discussion} discusses the result and compares it to the previous and other proofs.
\end{itemize}

\section{The problem: Function evaluation}\label{sec:the-problem-function-evaluation}
\begin{table}[h!]
	\centering
	\begin{tabular}{| m{2.5cm} | m{5cm} |}
		\hline
		$f\colon D\to C$ & Target function, $D\subseteq \bool^n$, $n\in\mathbb{N}$, $|C|<\infty$\\
		\hline
		$\did$ & Input value\\
		\hline\vspace{2pt}
		$\cvec{Q}$ & Query space, $Q:=\{0,1,\ldots n\}$\\\hline
		$O_x \in \cvec{Q\times Q}$ & Unitary oracle for input $\did$ as in Equations \ref{eq:oracle_1}--\ref{eq:oracle_2} \\
		\hline
		$\cvec{W}$ & Ancilla space, $W$ arbitrarily large\\\hline
		\vspace{2pt}
 		$U^0$, \ldots, $U^T\in \cvec{(Q\times W)\times (Q\times W)}$ & Initial, \ldots, final algorithm unitary as in Equation \ref{eq:evolution}\\\hline\vspace{2pt}
		$\ket{\xi}\in\cvec{Q\times W}$ & Initial state\\\hline
		$\ket{\tau_x}\in\cvec{Q\times W}$ & Final state for input $\did$ as in Equation \ref{eq:evolution}\\\hline
		$P_c$ & Projector in final measurement, measurement result $c\in C$, Equation \ref{eq:function_measurements}\\\hline
		$\ket{\ps}\in \cvec{D}$ & Vector whose entries are square roots of success probabilities, i.e. $\braket{x}{\ps} := \norm{P_{f(x)} \ket{\tau_x}}$\\\hline
		$\delta>0$ & Success bias, i.e. lower bound on success probability minus $1/2$ for all $\did$ as in Equation \ref{eq:function_measurements}\\\hline
	\end{tabular}
	\caption{Notation introduced in Section \ref{sec:the-problem-function-evaluation}. Table \ref{table:notation_linalg} contains more notation not used before Section \ref{sec:a-matrix-norm-inequality}.}\label{table:notation_funceval}
\end{table}
We describe the model; Table \ref{table:notation_funceval} contains a reference of symbols introduced in this section.
We consider a query problem to be described by a target function $f\colon D\to C$, where $D\subseteq \bool^n$ is a collection of bit-strings of length $n<\infty$ and $C$ is a finite set. We define a \emph{query space} $\cvec{Q}$ with $Q:=\{0,1,\ldots n\}$; in a problem instance, the input $x\in D$ is encoded in a \textit{phase oracle} unitary $O_x \colon \cvec{Q} \to \cvec{Q}$ defined by
\begin{align}
	O_x\ket{j} &:= (-1)^{x_j} \ket{j} ~~~~~ \mathrm{for~} j<n,\label{eq:oracle_1}\\
	O_x\ket{n} &:= \ket{n}.\label{eq:oracle_2} 
\end{align}
A quantum query algorithm solving the problem acts on the query space together with an \emph{ancilla space} $\cvec{W}$, which is arbitrarily large (in other words, the computer is not restricted regarding the size of the internal memory it uses). It starts with an $\did$-independent\footnote{In the traditional setup; as we'll see, generalizations allow for input-dependent initial states.} initial state $\ket{\xi}\in \cvec{Q\times W}$ and applies a sequence of unitaries $U^0$, $U^1$, $U^2$, \ldots, $U^{T}\colon \cvec{Q\times W}\to \cvec{Q\times W}$ which act alternatingly with the oracle unitary $O_x$ (where $O_x$ acts trivially on the ancilla space $\cvec{W}$). For a given $x\in D$, this leads to a state
\begin{equation}\label{eq:evolution}
	\ket{\tau_x} = U^T O_x U^{T-1} O_x\ldots U^1 O_x U^0 \ket{\xi}.
\end{equation}
The algorithm's goal is to determine $f(x)$ with error probability at most $1/2-\delta$ for a given $\delta>0$ and any $\did$. By standard quantum information theory,\footnote{More precisely, because of the deferred measurement principle and the fact that the ancilla space $\cvec{W}$ is arbitrarily large and any quantum positive-operator valued measurement (POVM) can be implemented by a unitary acting on an ancilla space initialized in a determined state and a projection-valued measurement, see \cite{nielsen2002quantum}.} we may assume w.l.o.g. that it does so by a single projection-valued measurement (PVM) with outcomes in $C$ performed after applying the last unitary. Therefore, if the algorithm is successful, there exists a collection of projectors $\left( P_c\right)_{c\in C}$ with each $P_c$ acting on $\cvec{Q\times W}$, $P_c P_{c'}=0$ for any $c\neq c'$, and
\begin{equation}\label{eq:function_measurements}
	\norm{P_{f(x)} \ket{\tau_x}}^2 \geq \frac{1}{2}+\delta ~~~~~ \forall \did.
\end{equation}
Assuming a collection of $\ket{\tau_x}$ and $P_c$, we define a vector $\ket{\ps}\in\cvec{D}$ whose entries are the square roots of success probabilities, i.e.
\begin{equation}
	\braket{x}{\ps} := \norm{P_{f(x)} \ket{\tau_x}}.
\end{equation}
\section{The new lemma: From perfect initial to perfect final states}\label{sec:the-new-lemma-from-perfect-initial-to-perfect-final-states}
At this point, we may already formalize the novelty in this article's approach by the following lemma:
\begin{lemma}\label{lem:initial_to_final}
Given $f\colon D\to C$, consider the problem of evaluating $f$ with standard oracles as described in Section \ref{sec:the-problem-function-evaluation}. Suppose there is a quantum query algorithm solving this problem in $T$ queries with success bias $\delta>0$, i.e. a query algorithm that transforms $\ket{\xi}$ to $\seqd{\ket{\tau_x}}$ according to Equation \ref{eq:evolution} such that projectors $\left( P_c\right)_{c\in C}$ fulfilling Equation \ref{eq:function_measurements} for that $\delta$ exist. Then there exist collections of normalized modified initial and final states $\seqd{\ket{\xi'_x}}$, $\seqd{\ket{\tau'_x}}$ such that
\begin{enumerate}
	\item \label{enum:lemma_evolution} Applying the algorithm to the $\seqd{\ket{\xi'_x}}$ results in final states $\seqd{\ket{\tau'_x}}$, i.e.
	\begin{equation}
		\ket{\tau'_x} =  U^T O_x U^{T-1} O_x\ldots U^1 O_x U^0 \ket{\xi'_x} ~~~~~ \forall\did,
	\end{equation}
\item \label{enum:lemma_perfectfinal} the states $\seqd{\ket{\tau'_x}}$ corresponding to different function values are orthogonal, i.e. can be perfectly distinguished:
\begin{equation}
	\braket{\tau'_x}{\tau'_{x'}} = 0~~~~~ \forall x,x'\in D\colon f(x)\neq f(x'),
\end{equation}
\item \label{enum:lemma_goodinitial} the modified initial states $\seqd{\ket{\xi'_x}}$ have high overlap with the $\ket{\xi}$, specifically,
\begin{equation}\label{eq:high_overlap}
	\braket{\xi}{\xi'_x} = \norm{P_{f(x)} \ket{\tau_x}} = \braket{x}{\sqrt{p_\mathrm{succ}}}\geq \sqrt{1/2+\delta} ~~~~~ \forall\did.
\end{equation}
\end{enumerate}
\end{lemma}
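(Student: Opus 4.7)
The plan is to define the modified final states as the normalized projections of $\ket{\tau_x}$ onto the ``correct'' measurement subspaces,
\[
\ket{\tau'_x} := \frac{P_{f(x)}\ket{\tau_x}}{\norm{P_{f(x)}\ket{\tau_x}}},
\]
and then to define the modified initial states by uncomputing the algorithm with the $x$-oracle:
\[
\ket{\xi'_x} := (U^0)^\dagger O_x (U^1)^\dagger O_x \cdots (U^{T-1})^\dagger O_x (U^T)^\dagger \ket{\tau'_x}.
\]
Both families are automatically normalized, since $\ket{\tau'_x}$ is and the $U^t$ and $O_x$ are unitary; and item~\ref{enum:lemma_evolution} holds by construction, because running the forward algorithm with oracle $O_x$ on $\ket{\xi'_x}$ inverts the uncomputation. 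So the remaining work is to verify items~\ref{enum:lemma_perfectfinal} and~\ref{enum:lemma_goodinitial}.

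For item~\ref{enum:lemma_perfectfinal}, I will compute
\[
\braket{\tau'_x}{\tau'_{x'}} = \frac{\bra{\tau_x}P_{f(x)}P_{f(x')}\ket{\tau_{x'}}}{\norm{P_{f(x)}\ket{\tau_x}}\cdot\norm{P_{f(x')}\ket{\tau_{x'}}}},
\]
which vanishes whenever $f(x)\neq f(x')$ by pairwise orthogonality of the measurement projectors. For item~\ref{enum:lemma_goodinitial}, I will transfer the adjoint algorithm in the definition of $\ket{\xi'_x}$ over onto $\bra{\xi}$; the resulting forward algorithm takes $\bra{\xi}$ to $\bra{\tau_x}$, yielding $\braket{\xi}{\xi'_x} = \braket{\tau_x}{\tau'_x}$. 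Substituting the definition of $\ket{\tau'_x}$ and using $P_{f(x)}^\dagger = P_{f(x)} = P_{f(x)}^2$ gives
\[
\braket{\tau_x}{\tau'_x} = \frac{\norm{P_{f(x)}\ket{\tau_x}}^2}{\norm{P_{f(x)}\ket{\tau_x}}} = \norm{P_{f(x)}\ket{\tau_x}} \geq \sqrt{1/2+\delta},
\]
where the last inequality is Equation~\ref{eq:function_measurements}. This computation incidentally confirms that $\braket{\xi}{\xi'_x}$ is real and non-negative, so the chain of equalities in item~\ref{enum:lemma_goodinitial} is well-posed.

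I do not anticipate any substantive technical obstacle: the content of the lemma is the conceptual trick of \emph{turning the output condition into an input condition} by making the uncomputation $x$-dependent while keeping the forward algorithm unitaries $U^0,\ldots,U^T$ fixed. Once the construction above is written down, the verifications of the three items reduce to three mechanical inner-product identities, each a couple of lines long.
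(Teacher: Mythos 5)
Your construction is identical to the paper's: project onto the correct subspace, renormalize, and uncompute with $A_x^\dagger$ (you simply wrote out $A_x^\dagger$ explicitly, using that $O_x^\dagger = O_x$), and your verifications of the three items match the paper's line by line, including the key observation $\braket{\xi}{\xi'_x} = \braket{\tau_x}{\tau'_x} = \norm{P_{f(x)}\ket{\tau_x}}$. The proposal is correct and takes the same route.
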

\begin{proof}
	For any $\did$, introduce the shorthand
	\begin{equation*}
		A_x := U^T O_x U^{T-1} O_x\ldots U^1 O_x U^0
	\end{equation*}
	for the (unitary) total evolution operator and define
	\begin{align}
		\ket{\tau'_x} &:= \frac{P_{f(x)} \ket{\tau_x}}{\norm{P_{f(x)}\ket{\tau_x}}},\\
		\ket{\xi'_x} &:= A_x^{\dagger} \ket{\tau'_x}.
	\end{align}
	In words, we defined $\seqd{\ket{\tau'_x}}$ by projecting the $\seqd{\ket{\tau_x}}$ onto the "correct" subspaces and renormalizing, and $\seqd{\ket{\xi'_x}}$ by uncomputing. Then properties \ref{enum:lemma_evolution} and \ref{enum:lemma_perfectfinal} are clear from these definitions and the fact that $A_x$ is unitary, i.e. that its Hermitian conjugate is also its inverse. To show property \ref{enum:lemma_goodinitial}, compute using Equations \ref{eq:evolution} and \ref{eq:function_measurements} that
	\begin{align}
		\braket{\xi}{\xi'_x} &= \bra{\xi} A_x^\dagger \ket{\tau'_x} = (A_x\ket{\xi})^\dagger \ket{\tau'_x}=\braket{\tau_x}{\tau'_x} \\
		&= \frac{\bra{\tau_x} P_{f(x)} \ket{\tau_x}}{\norm{P_{f(x)}\ket{\tau_x}}} = \norm{P_{f(x)}\ket{\tau_x}} \geq \sqrt{1/2+\delta}.\label{eq:computing_xixipoverlap}
	\end{align}
\end{proof}
\section{A matrix norm inequality}\label{sec:a-matrix-norm-inequality}
\begin{table}[h!]
	\centering
\begin{tabular}{| m{2cm} | m{5cm} |}
	\hline
$\herm{D}$ & Set of Hermitian complex $D\times D$ matrices \\
\hline
$\psd{D}$ & Set of positive semidefinite complex $D\times D$ matrices \\
\hline
$H[x,x']$ & Entry at $x$th row, $x'$th column of matrix $H$\\
\hline
$G_\xi\in \psd{D}$ & Gram matrix of state collection $\seqd{\ket{\xi_x}}$, i.e. $G_\xi[x,x']:=\braket{\xi_x}{\xi_{x'}}$; always in $\psd{D}$ by \cite[Chapter 1.1, Item (vi)]{bhatia2009positive} \\
\hline
$H_1 \circ H_2$ & Hadamard (entrywise) product of matrices $H_1$, $H_2$  \\
\hline
$\norm{H}$ & Matrix norm (largest singular value) of $H\in \mathbb{C}^{D\times D}$\\	
\hline
\end{tabular}
\caption{Generic linear algebra notation, relevant starting from Section \ref{sec:a-matrix-norm-inequality}. See also Table \ref{table:notation_funceval}.}\label{table:notation_linalg}
\end{table}

Using the notation of Table \ref{table:notation_linalg}, the task of this section\footnote{I am not satisfied with this proof and hope that there is a simpler way to --- possibly by referring to a different literature result.} is to prove the following lemma, which we'll be able to plug into the proof of our main Theorem \ref{thm:adversary_function}.
\begin{lemma}\label{lem:matrix_norm_inequality}
	Let $G_{\xi'}$ be the Gram matrix of a state collection $\seqd{\ket{\xi'_x}}$ that was obtained from Lemma \ref{lem:initial_to_final} with some $\delta>0$, i.e. fulfills Property \ref{enum:lemma_goodinitial} of that lemma with some normalized $\ket{\xi}\in \cvec{D}$. Then for any $\Gamma \in \herm{D}$,
	\begin{equation}
		2\delta \norm{\Gamma} \leq \norm{\Gamma \circ G_{\xi'}}.
	\end{equation}
\end{lemma}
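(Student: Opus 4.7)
The plan is to split $G_{\xi'}$ into a rank-one ``mean'' piece capturing the guaranteed overlap with $\ket{\xi}$ plus a positive-semidefinite fluctuation, then combine a test-vector lower bound on the Hadamard product with the rank-one part and a Schur-type upper bound on the Hadamard product with the fluctuation via the reverse triangle inequality.

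Concretely, Property \ref{enum:lemma_goodinitial} of Lemma \ref{lem:initial_to_final} says each $\beta_x := \braket{\xi}{\xi'_x}$ is a nonnegative real number with $\beta_x^2 \geq 1/2+\delta$. Writing $\ket{\xi'_x} = \beta_x \ket{\xi} + \ket{w_x}$ with $\ket{w_x}\perp\ket{\xi}$ and using normalization of $\ket{\xi'_x}$ gives $\braket{w_x}{w_x} = 1 - \beta_x^2 \leq 1/2-\delta$. Letting $\vec{\beta}\in\mathbb{R}^D$ be the vector with entries $\beta_x$ and $R\in\psd{D}$ be the Gram matrix of $\seqd{\ket{w_x}}$,
\[
G_{\xi'} = \vec{\beta}\,\vec{\beta}^{\,T} + R, \qquad R[x,x]\leq 1/2-\delta \quad \forall\,\did.
\]

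For the fluctuation, I would invoke the sharp Schur product bound from \cite{bhatia2009positive}: for any $R\in\psd{D}$ and any matrix $\Gamma$, $\norm{\Gamma\circ R}\leq \max_{\did} R[x,x]\,\norm{\Gamma}$, which yields $\norm{\Gamma\circ R}\leq (1/2-\delta)\norm{\Gamma}$. For the rank-one piece, observe that $\Gamma\circ(\vec{\beta}\vec{\beta}^{\,T}) = D_{\vec{\beta}}\,\Gamma\,D_{\vec{\beta}}$, where $D_{\vec{\beta}} := \diag{\vec{\beta}}$ is invertible. Taking $u$ to be a unit top eigenvector of the Hermitian matrix $\Gamma$ (so $|u^\dagger\Gamma u|=\norm{\Gamma}$) and testing with $v := D_{\vec{\beta}}^{-1}u/\norm{D_{\vec{\beta}}^{-1}u}$, a short calculation gives $|v^\dagger D_{\vec{\beta}}\Gamma D_{\vec{\beta}} v| = \norm{\Gamma}/\norm{D_{\vec{\beta}}^{-1}u}^2$, while the pointwise estimate $\beta_x^{-2}\leq (1/2+\delta)^{-1}$ together with $\norm{u}=1$ yields $\norm{D_{\vec{\beta}}^{-1}u}^2\leq(1/2+\delta)^{-1}$, so $\norm{D_{\vec{\beta}}\Gamma D_{\vec{\beta}}}\geq(1/2+\delta)\norm{\Gamma}$.

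Combining the two estimates via the reverse triangle inequality gives $\norm{\Gamma\circ G_{\xi'}} \geq (1/2+\delta)\norm{\Gamma} - (1/2-\delta)\norm{\Gamma} = 2\delta\norm{\Gamma}$, as claimed; the splitting in the displayed decomposition above is what allows the $1/2$'s to cancel and the $\delta$ contributions to add. The main step I expect to require care is confirming that the precise Schur bound with the $\max_x R[x,x]$ constant is what appears in \cite{bhatia2009positive} (rather than the cruder $\norm{\Gamma\circ R}\leq \norm{R}\norm{\Gamma}$, which would only yield $2\delta - O(\delta^2)$ after balancing the two pieces); if only a weaker form is available, I would instead use the factorization $R=S^\dagger S$ and bound $\norm{\Gamma\circ R}$ directly in terms of the column norms of $S$ to recover the same constant.
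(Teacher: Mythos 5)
Your proof is correct and takes essentially the same route as the paper: you decompose $G_{\xi'}$ into the rank-one overlap piece $\kbp$ plus the positive-semidefinite remainder $G_{\xi'}-\kbp$, control the Hadamard product with the remainder via Schur's bound (Theorem~\ref{thm:psd_schurnorm}) using the diagonal-entry estimate $1-\beta_x^2\leq 1/2-\delta$, and finish with the reverse triangle inequality. The only variation is in lower-bounding $\norm{\Gamma\circ\kbp}$: you test the diagonally conjugated matrix $D_{\vec{\beta}}\Gamma D_{\vec{\beta}}$ against a top eigenvector of $\Gamma$ rescaled by the inverse overlaps, whereas the paper's Lemma~\ref{lem:multp} applies Schur's theorem a second time to the entrywise-inverse rank-one matrix; both yield the same $(1/2+\delta)\norm{\Gamma}$ lower bound, with your test-vector variant being slightly more elementary.
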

We'll first reduce this lemma to other lemmata.
\begin{proof}[Outline]
By a triangle inequality,
\begin{align}
	\norm{\Gamma\circ \kbp} &= \norm{\Gamma \circ \left(G_{\xi'} - \left(G_{\xi'} - \kbp\right) \right)}\\
	&\leq \norm{\Gamma \circ G_{\xi'}} + \norm{\Gamma\circ \left(G_{\xi'} - \kbp\right)}.
\end{align}
This implies that
\begin{equation}
	\norm{\Gamma\circ \kbp} - \norm{\Gamma\circ \left(G_{\xi'} - \kbp\right)} \leq \norm{\Gamma \circ G_{\xi'}}.
\end{equation}
Lemmata \ref{lem:multp} and \ref{lem:multgdelta} bound the two summands on the left-hand side. Together, they imply that $\norm{\Gamma \circ G_{\xi'}}$ is lower-bounded by
\begin{equation}
	\left(2 \min_{\did} \left\vert{\braket{x}{\ps}}\right\vert^2 - 1\right)\norm{\Gamma},
\end{equation}
which in turn is lower-bounded by $2\delta \norm{\Gamma}$.
\end{proof}

It remains to state and prove Lemmata \ref{lem:multp} and \ref{lem:multgdelta}. Our approach crucially relies on a theorem by Schur, which we state without proof:
\begin{thm}[\cite{bhatia2009positive}, Theorem 1.4.1]\label{thm:psd_schurnorm}
	Let $\Gamma\in \herm{D}$ be Hermitian and $S \in \psd{D}$ positive semidefinite. Then
	\begin{equation}
		\norm{\Gamma \circ S} \leq \max_{\did} S[x,x] \norm{\Gamma}.
	\end{equation}
\end{thm}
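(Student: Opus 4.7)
The plan is to prove this as a special case of the standard factorization argument for the Schur product theorem in operator norm form (this is the proof in Bhatia's book, and what I would write even starting from scratch). The core idea is that Hadamard multiplication with a PSD matrix $S$ can be realized as a two-sided compression of $\Gamma \otimes I$ by a structured map built from a factorization of $S$, and the norm of that compression map picks up exactly $\max_x S[x,x]$.

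First I would factor $S$: since $S \in \psd{D}$, there exist vectors $\left(\ket{u_x}\right)_{\did} \subseteq \cvec{k}$ (for some finite $k$, e.g.\ the columns of $S^{1/2}$) with $S[x,y] = \braket{u_x}{u_y}$. In particular $S[x,x] = \norm{\ket{u_x}}^2$. Next I would define a linear map $M\colon \cvec{D} \to \cvec{D} \otimes \cvec{k}$ by $M\ket{x} := \ket{x}\otimes\ket{u_x}$ extended linearly.

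The central computation is to verify the identity
\begin{equation}
    M^\dagger \left(\Gamma \otimes I_{\cvec{k}}\right) M = \Gamma \circ S,
\end{equation}
which is immediate from taking matrix elements: $\bra{x} M^\dagger (\Gamma \otimes I) M \ket{y} = \Gamma[x,y]\braket{u_x}{u_y} = (\Gamma \circ S)[x,y]$. Applying submultiplicativity of the operator norm gives $\norm{\Gamma\circ S} \leq \norm{M}^2 \norm{\Gamma \otimes I_{\cvec{k}}} = \norm{M}^2 \norm{\Gamma}$. Then I would compute $\norm{M}^2 = \norm{M^\dagger M}$ and observe that $M^\dagger M$ acts diagonally: $M^\dagger M \ket{y} = \norm{\ket{u_y}}^2 \ket{y} = S[y,y]\ket{y}$. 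Hence $\norm{M}^2 = \max_{\did} S[x,x]$, completing the argument.

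There is no real obstacle here — the whole proof is bookkeeping with tensor products, and the only substantive input is the factorization of a PSD matrix. The main thing to be careful about is the consistent row/column convention in $\Gamma[x,y] = \bra{x}\Gamma\ket{y}$ when unfolding $(\Gamma\otimes I)M\ket{y}$. Conceptually, the inequality is tight precisely because $M$ embeds $\cvec{D}$ into $\cvec{D}\otimes\cvec{k}$ with column norms $\sqrt{S[x,x]}$, and the Hadamard product is nothing more than this embedding's pullback of $\Gamma\otimes I$.
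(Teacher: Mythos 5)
Your argument is correct. Note that the paper does not prove this statement at all --- it is stated explicitly without proof and imported from Bhatia's book (Theorem 1.4.1); your factorization argument (write $S[x,y]=\braket{u_x}{u_y}$ via $S^{1/2}$, realize $\Gamma\circ S = M^\dagger\left(\Gamma\otimes I\right)M$ with $M\ket{x}=\ket{x}\otimes\ket{u_x}$, and compute $\norm{M}^2=\norm{M^\dagger M}=\max_{\did} S[x,x]$) is precisely the standard proof of that cited result, so there is nothing to reconcile. As a side remark, your proof never uses that $\Gamma$ is Hermitian, so the inequality in fact holds for arbitrary $\Gamma\in\cvec{D\times D}$, which is harmless here since the paper only applies it to Hermitian $\Gamma$.
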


\begin{lemma}\label{lem:multp}
	Let $\ket{\ps}\in\cvec{D}$ with all entries nonzero, and $\Gamma \in \herm{D}$. Then
	\begin{equation}
		 \min_{\did} \left\vert{\braket{x}{\ps}}\right\vert^2 \norm{\Gamma}\leq \norm{\Gamma \circ \kbp}.
	\end{equation}
\end{lemma}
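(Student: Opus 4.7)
The plan is to recognize that the Hadamard product of any matrix with a rank-one matrix of the form $\kbp$ is precisely a two-sided conjugation by a diagonal matrix. Setting $D := \diag{\ket{\ps}}$ to be the diagonal matrix with entries $\braket{x}{\ps}$, an entrywise comparison will show
\[
\Gamma \circ \kbp \;=\; D\,\Gamma\,D^\dagger,
\]
since both sides have entry $\Gamma[x,x']\,\braket{x}{\ps}\,\overline{\braket{x'}{\ps}}$ at position $(x,x')$.

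Because every entry of $\ket{\ps}$ is nonzero by hypothesis, $D$ is invertible, and its inverse is diagonal with entries $1/\braket{x}{\ps}$. Consequently $\norm{D^{-1}} = \norm{(D^\dagger)^{-1}} = 1/\min_{\did}\left|\braket{x}{\ps}\right|$. From the conjugation identity above I would rearrange to
\[
\Gamma \;=\; D^{-1}\bigl(\Gamma \circ \kbp\bigr)(D^\dagger)^{-1}
\]
and apply submultiplicativity of the operator norm.

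This immediately yields
\[
\norm{\Gamma} \;\leq\; \norm{D^{-1}}\cdot\norm{\Gamma \circ \kbp}\cdot\norm{(D^\dagger)^{-1}} \;=\; \frac{\norm{\Gamma \circ \kbp}}{\min_{\did}\left|\braket{x}{\ps}\right|^2},
\]
and multiplying through by $\min_{\did}\left|\braket{x}{\ps}\right|^2$ gives exactly the claim. The only steps that require any care are verifying the conjugation identity entrywise and checking that non-vanishing of the coordinates of $\ket{\ps}$ legitimizes the inversion; there is no genuine obstacle, and in particular Theorem \ref{thm:psd_schurnorm} is not needed for this direction of the argument (it enters only for the companion Lemma bounding $\norm{\Gamma\circ(G_{\xi'}-\kbp)}$).
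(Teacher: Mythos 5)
Your proof is correct, and it takes a more elementary route than the paper's. You observe that Hadamard-multiplying by the rank-one matrix $\kbp$ is the same as a two-sided conjugation by a diagonal matrix, $\Gamma \circ \kbp = \diag{\ket{\ps}}\,\Gamma\,\diag{\ket{\ps}}^\dagger$, invert using the nonvanishing of the entries, and then apply submultiplicativity of the operator norm together with the fact that the norm of a diagonal matrix is the largest modulus of its entries. The paper instead sets $\Gamma' := \Gamma \circ \kbp$, notes that $\Gamma = \Gamma' \circ \kbp[\ps^{-1}]$, and invokes Theorem \ref{thm:psd_schurnorm} (Schur's norm bound for Hadamard products with an arbitrary positive semidefinite factor) in the ``inverse'' direction. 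Your argument is self-contained and does not need Theorem \ref{thm:psd_schurnorm} at all for this lemma; the trade-off is that it exploits the rank-one structure of $\kbp$ and would not extend to an arbitrary PSD Hadamard factor, which is the extra generality the cited theorem provides. In effect you have reproved the rank-one special case of Schur's theorem directly. As you correctly note, Theorem \ref{thm:psd_schurnorm} is still required in the companion Lemma \ref{lem:multgdelta}, where the PSD factor $G_{\xi'}-\kbp$ need not be rank one, so your approach lightens this lemma but does not eliminate the dependence on the Bhatia result overall.
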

\begin{proof}
	Set $\Gamma' := \Gamma \circ \kbp[\ps]$. Let $\ket{\ps^{-1}}\in \cvec{D}$ be the vector whose entries are inverse to those of $\ket{\ps}$. Then the claim is equivalent to
	\begin{align}
		&\norm{\Gamma' \circ \kbp[\ps^{-1}]} \leq \frac{\norm{\Gamma'}}{\min_{\did} \left\vert{\braket{x}{\ps}}\right\vert^2 } \\
		=& \max_{\did} \left(\kbp[\ps^{-1}]\right)\left[x,x\right] \norm{\Gamma'}.
	\end{align}
As $\kbp[\ps^{-1}]$ is positive semidefinite, this is an instance of Theorem \ref{thm:psd_schurnorm}.
\end{proof}

\begin{lemma}\label{lem:multgdelta}
	Let $\Gamma\in \herm{D}$ and $G_{\xi'}$ be the Gram matrix of a state collection as in Lemma \ref{lem:initial_to_final}. Then
	\begin{equation}
		-\left(1 - \min_{\did} \left\vert{\braket{x}{\ps}}\right\vert^2\right) \norm{\Gamma} \leq -\norm{\Gamma \circ \left( G_{\xi'} - \ket{\ps}\bra{\ps} \right)}.
	\end{equation}
\end{lemma}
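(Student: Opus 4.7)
The plan is to recognize that $G_{\xi'} - \kbp[\ps]$ is itself positive semidefinite --- being the Gram matrix of the projected collection $\seqd{Q\ket{\xi'_x}}$ where $Q := I - \ket{\xi}\bra{\xi}$ --- and then invoke Schur's product theorem (Theorem \ref{thm:psd_schurnorm}) directly.

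The key observation comes from Property \ref{enum:lemma_goodinitial} of Lemma \ref{lem:initial_to_final}: the entries of $\ket{\ps}$ read $\braket{x}{\ps} = \braket{\xi}{\xi'_x}$ and are real and nonnegative. A short entry-by-entry computation then gives
\begin{equation*}
(G_{\xi'} - \kbp[\ps])[x,x'] = \braket{\xi'_x}{\xi'_{x'}} - \braket{\xi'_x}{\xi}\braket{\xi}{\xi'_{x'}} = \bra{\xi'_x} Q \ket{\xi'_{x'}} = \braket{Q\xi'_x}{Q\xi'_{x'}},
\end{equation*}
using $Q^2 = Q = Q^\dagger$, so that $G_{\xi'} - \kbp[\ps]$ is the Gram matrix of $\seqd{Q\ket{\xi'_x}}$ and in particular positive semidefinite.

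Next, I would compute the diagonal: since $\ket{\xi'_x}$ is normalized, $(G_{\xi'} - \kbp[\ps])[x,x] = 1 - |\braket{\xi}{\xi'_x}|^2 = 1 - |\braket{x}{\ps}|^2$, and the largest diagonal entry equals $1 - \min_{\did}|\braket{x}{\ps}|^2$. Applying Theorem \ref{thm:psd_schurnorm} to $\Gamma$ and $S := G_{\xi'} - \kbp[\ps]$ then bounds $\norm{\Gamma\circ (G_{\xi'} - \kbp[\ps])}$ by $(1 - \min_{\did}|\braket{x}{\ps}|^2)\norm{\Gamma}$; negating both sides yields the lemma.

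I do not foresee any substantial obstacle. The only non-routine step is noticing that subtracting $\kbp[\ps]$ corresponds exactly to projecting each $\ket{\xi'_x}$ off the common reference vector $\ket{\xi}$; once that is seen, both positive semidefiniteness and the diagonal computation are immediate, and the remainder is just an invocation of Schur.
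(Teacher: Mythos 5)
Your proof is correct and matches the paper's essentially verbatim: both identify $G_{\xi'} - \kbp$ as the Gram matrix of the collection $\seqd{(I-\ket{\xi}\bra{\xi})\ket{\xi'_x}}$ (your $Q$ is the paper's $I - P_\xi$), both compute the diagonal using normalization of the $\ket{\xi'_x}$, and both conclude via Theorem \ref{thm:psd_schurnorm}.
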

\begin{proof}
	Set $\Delta G :=  G_{\xi'} - \kbp $. We deduce the claim from Theorem \ref{thm:psd_schurnorm} and the following two facts:
	\begin{enumerate}
		\item $\Delta G$ is positive semidefinite. To see this, recall that $\braket{\xi'_x}{\xi} = \braket{x}{\ps}$ for all $\did$ and the entries of $\Delta G$ fulfill
		\begin{align}
	\Delta G[x, x'] &= \braket{\xi'_x}{\xi'_{x'}} - \braket{\xi'_x}{\xi}\braket{\xi}{\xi'_{x'}}\\
	& = \braket{\xi'_x}{I - P_\xi \mid \xi'_{x'}}.
	\end{align}
	Therefore, $\Delta G$ is the Gram matrix of the vectors $\seqd{\left(I - P_\xi\right)\ket{\xi'_x}}$, so it is positive semidefinite.
	\item The diagonal entries of $\Delta G$ are at most
	\begin{equation}
		1 - \min_{\did} \left\vert{\braket{x}{\ps}^2}\right\vert = \max_{\did} \left(1- \left\vert{\braket{x}{\ps}^2}\right\vert\right).
	\end{equation}
This follows from the fact that the diagonal entries of $G_{\xi'}$ are $1$, as the vectors $\ket{\xi'_{x}}$ are normalized.
	\end{enumerate}
\end{proof}
\section{The adversary bound for state conversion}\label{sec:the-adversary-bound-for-state-conversion}
The adversary bound for state conversion \cite{lee2011quantum} gives a lower bound on the number of queries necessary to accomplish a so-called \emph{state conversion} task. The version we need follows from Lemma 4.7 in that work, but note that it uses slightly different conventions: It defines a "filtered $\gamma_2$-norm", which is twice of what we will call the adversary bound. For another reference, see \cite[Theorem 10 and Remark 9]{belovs2015variations}.
\begin{thm}[{follows from \cite[Lemma 4.7]{lee2011quantum}}]\label{thm:adversary_state}
 For $0\leq i < n$, define $\Delta_i \in \herm{D}$ by $\Delta_i[x,x'] := 0$ if $x_i = x'_i$ and $2$ otherwise.\footnote{In words, denoting whether one is able to distinguish $x$ from $x'$ by querying position $i$ of the query register.} Then any quantum query algorithm that converts an $x\in D$-dependent initial state collection $\seqd{\ket{\xi_x}}$ to a final state collection $\seqd{\ket{\tau_x}}$ must take at least
\begin{equation}\label{eq:def_adv}
	\adv{G_\xi - G_\tau} := \max_{\Gamma\in\herm{D}}\frac{\norm{\Gamma \circ \left(G_\xi - G_\tau\right)}}{\max_{0\leq i < n} \norm{\Delta_i \circ \Gamma}}
\end{equation}
queries, where $G_\xi$, $G_\tau\in\herm{D}$ are the Gram matrices of $\seqd{\ket{\xi_x}}$ and $\seqd{\ket{\tau_x}}$ as in Table \ref{table:notation_linalg}.
\end{thm}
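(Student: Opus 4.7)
The plan is to prove Theorem~\ref{thm:adversary_state} by the standard hybrid/progress argument: for any candidate $\Gamma \in \herm{D}$, I would track the Hadamard-twisted Gram matrix $\Gamma \circ G_t$ of the intermediate algorithm states $\ket{\psi^t_x} := U^t O_x U^{t-1} \cdots O_x U^0 \ket{\xi_x}$, where by construction $G_0 = G_\xi$ and $G_T = G_\tau$. Since each $U^t$ is input-independent, it acts identically on every $\ket{\psi^{t-1}_x}$ and preserves the Gram matrix, so only the $T$ oracle calls can move it. Telescoping over the oracle steps and applying the triangle inequality yields
\begin{equation*}
    \norm{\Gamma \circ (G_\xi - G_\tau)} \leq \sum_{t=1}^{T} \norm{\Gamma \circ \Delta^{(t)}},
\end{equation*}
where $\Delta^{(t)} := G_{t-1} - G_t$ is the Gram-matrix change caused by the $t$-th oracle. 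Decomposing the pre-query state as $\ket{\psi^{t-1}_x} = \sum_{j \in Q} \ket{j} \otimes \ket{\psi^{t-1}_{x,j}}$ and computing directly with the phase oracle gives
\begin{equation*}
    \Delta^{(t)}[x,x'] = \sum_{j<n} \Delta_j[x,x']\, \braket{\psi^{t-1}_{x,j}}{\psi^{t-1}_{x',j}}.
\end{equation*}
The theorem therefore reduces to establishing the per-query bound $\norm{\Gamma \circ \Delta^{(t)}} \leq \max_{0 \leq i<n} \norm{\Delta_i \circ \Gamma}$.

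This per-query bound is what I expect to be the main obstacle. A naive attempt to apply Schur's theorem (Theorem~\ref{thm:psd_schurnorm}) separately to each term $(\Gamma \circ \Delta_j) \circ F_j$ with PSD factor $F_j[x,x'] := \braket{\psi^{t-1}_{x,j}}{\psi^{t-1}_{x',j}}$ and then summing over $j$ is too lossy, because $\sum_{j<n} F_j[x,x]$ is bounded by $1$ only for each fixed $x$, not after taking the worst $x$ inside the sum. My approach is to treat the entire bilinear form in one shot: for arbitrary unit vectors $\ket{a}, \ket{b} \in \cvec{D}$, set
\begin{equation*}
    \ket{u} := \sum_{j<n} \sum_{x\in D} a_x \, \ket{x}\ket{\psi^{t-1}_{x,j}}\ket{j}, \qquad \ket{v} := \sum_{j<n} \sum_{x\in D} b_x \, \ket{x}\ket{\psi^{t-1}_{x,j}}\ket{j},
\end{equation*}
on the enlarged space $\cvec{D} \otimes \cvec{W} \otimes \cvec{Q}$, and let $M$ act on that space as an operator that is block-diagonal in the $\ket{j}$-basis of $\cvec{Q}$ with $j$-th block $(\Gamma \circ \Delta_j) \otimes I_W$. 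A short bookkeeping calculation identifies $\bra{a} (\Gamma \circ \Delta^{(t)}) \ket{b} = \bra{u} M \ket{v}$. Then the normalization $\sum_{j \in Q} \norm{\psi^{t-1}_{x,j}}^2 = 1$ forces $\norm{u}, \norm{v} \leq 1$, block-diagonality yields $\norm{M} = \max_{j<n} \norm{\Delta_j \circ \Gamma}$, and Cauchy--Schwarz closes the per-query bound.

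Combining the per-query bound with the telescoping inequality gives $T \cdot \max_i \norm{\Delta_i \circ \Gamma} \geq \norm{\Gamma \circ (G_\xi - G_\tau)}$. Dividing and taking the supremum over $\Gamma \in \herm{D}$ produces $T \geq \adv{G_\xi - G_\tau}$, completing the proof.
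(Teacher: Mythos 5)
The paper does not prove Theorem~\ref{thm:adversary_state}; it is taken on trust from \cite[Lemma 4.7]{lee2011quantum} (see also \cite{belovs2015variations}), so there is no in-paper argument to compare against. Your proof is correct and is essentially the standard hybrid/progress argument those references use. The reduction to the per-query bound --- tracking $G_t$, noting that the input-independent unitaries preserve it and only oracle calls move it, telescoping, and identifying $\Delta^{(t)}[x,x']=\sum_{j<n}\Delta_j[x,x']\braket{\psi^{t-1}_{x,j}}{\psi^{t-1}_{x',j}}$ --- is sound (the $j=n$ term drops because $O_x$ acts trivially there). You also correctly spot the crucial pitfall: applying Theorem~\ref{thm:psd_schurnorm} term-by-term wastes a factor because $\sum_{j<n}\max_x F_j[x,x]$ need not be bounded by $1$. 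Your one-shot fix --- packing $\ket{a}$ together with the block components $\ket{\psi^{t-1}_{x,j}}$ into a single vector on $\cvec{D}\otimes\cvec{W}\otimes\cvec{Q}$, evaluating against the $j$-block-diagonal operator $M$ with blocks $(\Gamma\circ\Delta_j)\otimes I_W$, and invoking $\lvert\bra{u}M\ket{v}\rvert\le\norm{u}\,\norm{M}\,\norm{v}$ with $\norm{u},\norm{v}\le 1$ (from $\sum_{j\in Q}\norm{\psi^{t-1}_{x,j}}^2=1$) and $\norm{M}=\max_{j<n}\norm{\Gamma\circ\Delta_j}$ --- is exactly the right idea and closes the per-query bound cleanly. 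One small technical point to add: when you divide and take the supremum, the degenerate case $\max_i\norm{\Delta_i\circ\Gamma}=0$ forces $\Gamma$ to be diagonal, and then $\Gamma\circ(G_\xi-G_\tau)=0$ as well since unitary evolution preserves the diagonal of the Gram matrix; so the ratio is conventionally $0$ there and the maximization is well-posed.
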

\section{Putting it together}\label{sec:putting-it-together}
Using Lemmata \ref{lem:initial_to_final} and \ref{lem:matrix_norm_inequality} and the adversary bound for state conversion, Theorem \ref{thm:adversary_state}, we now prove the central claim of our paper. We define an adversary bound for function evaluation in the statement of the next theorem; however, note that it is smaller by a factor of $2$ than the definition used in \cite{childs2017lecture}.
\begin{thm}\label{thm:adversary_function}
	Consider the problem of evaluating $f\colon D\to C$ with standard oracles and success probability at least $1/2+\delta$ for all $\did$ as in Section \ref{sec:the-problem-function-evaluation}.
	Let $\mathbb{F}_f\subset \herm{D}$ be the space of Hermitian matrices whose $(x,x')$ entries are $0$ whenever $f(x) = f(x')$. With $\Delta_i$ as in Theorem \ref{thm:adversary_state},
	\begin{equation}
		2 \delta \adv{f} := 2 \delta 	\max_{\Gamma\in\mathbb{F}_f}\frac{\norm{\Gamma}}{\max_{0\leq i < n} \norm{\Delta_i \circ \Gamma}}
	\end{equation}
	lower-bounds the number of queries that an algorithm solving this problem must take.
\end{thm}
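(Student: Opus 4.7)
The plan is to combine the three main ingredients in sequence: Lemma \ref{lem:initial_to_final} converts the original algorithm into a state-conversion algorithm with perfectly distinguishable final states; Theorem \ref{thm:adversary_state} then gives a lower bound on the number of queries in terms of Gram matrices; and Lemma \ref{lem:matrix_norm_inequality} eliminates the dependence on the output Gram matrix, leaving only the adversary quantity for $f$.

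Concretely, I would start by invoking Lemma \ref{lem:initial_to_final} to replace the input-independent initial state $\ket{\xi}$ and noisy final states $\ket{\tau_x}$ with the modified collections $\seqd{\ket{\xi'_x}}$ and $\seqd{\ket{\tau'_x}}$. The same sequence of unitaries $U^0,\ldots,U^T$ interleaved with $O_x$ now realizes the state conversion $\ket{\xi'_x}\mapsto \ket{\tau'_x}$, so Theorem \ref{thm:adversary_state} yields
\begin{equation}
T \;\geq\; \adv{G_{\xi'} - G_{\tau'}} \;=\; \max_{\Gamma\in\herm{D}} \frac{\norm{\Gamma \circ (G_{\xi'} - G_{\tau'})}}{\max_{0\leq i<n}\norm{\Delta_i\circ \Gamma}}.
\end{equation}

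Next, I would restrict the maximization to $\Gamma \in \mathbb{F}_f$. By property \ref{enum:lemma_perfectfinal} of Lemma \ref{lem:initial_to_final}, $G_{\tau'}[x,x'] = \braket{\tau'_x}{\tau'_{x'}} = 0$ whenever $f(x) \neq f(x')$, so $G_{\tau'}$ is supported entirely on the set $\{(x,x')\colon f(x)=f(x')\}$. Since $\Gamma \in \mathbb{F}_f$ is supported on the complementary set, their Hadamard product vanishes: $\Gamma \circ G_{\tau'} = 0$. Therefore
\begin{equation}
\norm{\Gamma \circ (G_{\xi'} - G_{\tau'})} \;=\; \norm{\Gamma \circ G_{\xi'}},
\end{equation}
and Lemma \ref{lem:matrix_norm_inequality} lower-bounds this by $2\delta \norm{\Gamma}$. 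Stringing the inequalities together and taking the maximum over $\Gamma\in \mathbb{F}_f$ gives
\begin{equation}
T \;\geq\; \max_{\Gamma \in \mathbb{F}_f} \frac{2\delta \norm{\Gamma}}{\max_{0\leq i<n}\norm{\Delta_i\circ\Gamma}} \;=\; 2\delta\, \adv{f},
\end{equation}
which is the desired claim.

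There is not really a hard step left once Lemmata \ref{lem:initial_to_final} and \ref{lem:matrix_norm_inequality} are in hand; the only conceptual subtlety is the observation that the orthogonality of the $\seqd{\ket{\tau'_x}}$ across $f$-classes lets us drop $G_{\tau'}$ entirely from the state-conversion adversary expression once we restrict to $\Gamma\in\mathbb{F}_f$. This is precisely where the benefit of ``turning the output condition into an input condition'' materializes: the $\delta$-dependence moves from a prefactor that must absorb the suboptimal overlap of output states into a clean multiplicative factor coming from the initial overlap condition \eqref{eq:high_overlap}.
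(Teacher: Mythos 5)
Your proof is correct and follows exactly the same route as the paper: apply Lemma \ref{lem:initial_to_final}, feed the modified states into Theorem \ref{thm:adversary_state}, use property \ref{enum:lemma_perfectfinal} to drop $\Gamma\circ G_{\tau'}$ for $\Gamma\in\mathbb{F}_f$, and finish with Lemma \ref{lem:matrix_norm_inequality}. The only cosmetic difference is that the paper first normalizes $\Gamma$ so that $\max_i\norm{\Delta_i\circ\Gamma}\le 1$ and rescales at the end, whereas you carry the denominator along; the argument is otherwise identical.
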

\begin{proof}
	Consider $\Gamma\in \mathbb{F}_f$ with
	\begin{equation}\label{eq:allsmallerone}
		\max_{0\leq i < n} \norm{\Delta_i \circ \Gamma} \leq 1,
	\end{equation}
	and assume that the problem can be solved in $T$ queries. Use Lemma \ref{lem:initial_to_final} to find a $T$-query quantum algorithm converting $\seqd{\ket{\xi'_x}}$ to $\seqd{\ket{\tau'_x}}$ as in that lemma. By Theorem \ref{thm:adversary_state}, this implies that
	\begin{equation}\label{eq:inequality_first}
		\norm{\Gamma \circ \left(G_{\xi'} - G_{\tau'}\right)} \leq T.
	\end{equation}
	For all $x,x'\in D$ with $f(x) \neq f(x')$, $G_{\tau'}[x,x'] =0$ (Property \ref{enum:lemma_perfectfinal} of Lemma \ref{lem:initial_to_final}). Furthermore, for all $x,x'\in D$ with $f(x) = f(x')$, $\Gamma[x,x'] = 0$ by assumption. This implies that $\Gamma \circ G_{\tau'} = 0$, and Equation \ref{eq:inequality_first} implies
	\begin{equation}
		\norm{\Gamma\circ G_{\xi'}} \leq T.
	\end{equation}
	Using Lemma \ref{lem:matrix_norm_inequality}, we conclude
	\begin{equation}
		2 \delta \norm{\Gamma} \leq \norm{\Gamma \circ G_{\xi'}} \leq T.
	\end{equation}
	As any $\Gamma \in \mathbb{F}_f$ can be obtained by rescaling some $\Gamma \in \mathbb{F}_f$ fulfilling Equation \ref{eq:allsmallerone}, the theorem follows.
\end{proof}
\section{Discussion and outlook}\label{sec:discussion}
The previous approach to prove the analogue to Theorem \ref{thm:adversary_function} (with the old prefactor) essentially involved bounding the quantity
\begin{equation}
	\norm{\Gamma \circ \left(G_\xi - G_\tau\right)},\label{eq:oldadv}
\end{equation}
in other words, the state conversion adversary bound corresponding to the algorithm applied to the true initial state (rather than the modified ones we obtained from Lemma \ref{lem:initial_to_final}). In the standard derivation, this leads to the old prefactor of $2\delta -\sqrt{1-4\delta^2}$, or $1-\sqrt{1-4\delta^2}$ in the case of Boolean functions \cite{childs2017lecture, negative_weights}. To my understanding (though I didn't prove this), Equation \ref{eq:oldadv} does not obey a bound as it is the result of this article.

The adversary bound is described in terms of the Gram matrices of initial and final states, and the proof may also be interpreted in terms of these (as discussed e.g. in \cite{yolcu2022adversary}). But similarly to the universal query algorithm of \cite{belovs2023one, yolcu2022adversary}, it turned out to be fruitful to think about both the "Gram matrix picture" and the "states-and-unitaries picture", using the latter to change the Gram matrices under consideration to more convenient ones. In our case, the convenience came from the perfect distinguishability of the modified final states, which corresponds to Gram matrices whose entries are exactly $0$. The interplay between these pictures --- and the way that using both in a proof can be helpful --- continues to mystify me.

As remarked, the proof of Lemma \ref{lem:matrix_norm_inequality} seems overly complicated to me, and I am not satisfied with it. Though I see some other approaches to it, they are not simpler at this point. I hope that one may prove Lemma \ref{lem:matrix_norm_inequality} in a more elegant way, with or without a literature result. That may allow an argument deriving the function evaluation adversary bound from scratch that is shorter than the approach given e.g. in \cite{childs2017lecture}.

In the history of the adversary bound, it has been a recurring theme that progress came from improved understanding of the output condition, i.e. the condition on state collections that allow determining a function value with nonzero error --- for example, the negative-weights adversary method of \cite{negative_weights} was possible due to a refined output condition. This paper fits into that theme. Belovs \cite{belovs2015variations} introduced "purifiers", which are another approach to the finite-error-probability problem, but doesn't give a constant factor. It would be interesting to check how the constant factor compares to the textbook one, and if the purifier framework can be put combined with this result.
\section{Acknowledgements}
I thank Aleksandrs Belovs for helpful discussions on this subject.
\bibliographystyle{plain}
\bibliography{adversary_better_prefactor}

\begin{thebibliography}{1}

\bibitem{ambainis2000quantum}
Andris Ambainis.
\newblock Quantum lower bounds by quantum arguments.
\newblock In {\em Proceedings of the thirty-second annual ACM symposium on
  Theory of computing}, pages 636--643, 2000.

\bibitem{belovs2015variations}
Aleksandrs Belovs.
\newblock Variations on quantum adversary.
\newblock {\em arXiv preprint arXiv:1504.06943}, 2015.

\bibitem{belovs2023one}
Aleksandrs Belovs and Duyal Yolcu.
\newblock One-way ticket to {L}as {V}egas and the quantum adversary.
\newblock {\em arXiv preprint arXiv:2301.02003}, 2023.

\bibitem{bhatia2009positive}
Rajendra Bhatia.
\newblock Positive definite matrices.
\newblock In {\em Positive Definite Matrices}. Princeton university press,
  2009.

\bibitem{childs2017lecture}
Andrew~M. Childs.
\newblock Lecture notes on quantum algorithms.

\bibitem{negative_weights}
Peter Hoyer, Troy Lee, and Robert Spalek.
\newblock Negative weights make adversaries stronger.
\newblock In {\em Proceedings of the Thirty-Ninth Annual ACM Symposium on
  Theory of Computing}, STOC '07, page 526–535, New York, NY, USA, 2007.
  Association for Computing Machinery.

\bibitem{lee2011quantum}
Troy Lee, Rajat Mittal, Ben~W Reichardt, Robert {\v{S}}palek, and Mario
  Szegedy.
\newblock Quantum query complexity of state conversion.
\newblock In {\em 2011 IEEE 52nd Annual Symposium on Foundations of Computer
  Science}, pages 344--353. IEEE, 2011.

\bibitem{nielsen2002quantum}
Michael~A Nielsen and Isaac Chuang.
\newblock Quantum computation and quantum information, 2002.

\bibitem{yolcu2022adversary}
Duyal Yolcu.
\newblock The adversary bound revisited: From optimal query algorithms to
  optimal control.
\newblock {\em arXiv preprint arXiv:2211.16293}, 2022.

\end{thebibliography}
\end{document}